\documentclass[11pt,a4paper]{article}
\usepackage{amsmath,amssymb,amscd}
\usepackage{amsthm,bbm}
\usepackage{mathtools}
\usepackage{float}
\usepackage[caption = false]{subfig}
\usepackage{tabularx}
\usepackage[final]{graphicx}
\usepackage[mathscr]{eucal}
\usepackage{color}
\usepackage{booktabs}
\usepackage{lineno}
\usepackage{textcomp}
\usepackage[utf8]{inputenc}
\usepackage[top=3cm,bottom=3cm,left=1.5cm,right=1.5cm,headheight=3cm,footskip=1cm,marginparwidth=2cm]{geometry}
\usepackage{fancyhdr}
\usepackage{lipsum}
\usepackage{mathpazo}
\usepackage{domitian}
\usepackage{lettrine}
\usepackage[T1]{fontenc}
\usepackage{pgfgantt}
\usepackage{setspace}
\usepackage{tikz}

\usepackage{tikz}
\tikzset{
  mynode/.style={fill,circle,inner sep=2pt,outer sep=0pt}
}
\usetikzlibrary{shapes,arrows}

\def\N{{\mathbb N}}

\newcommand\ds{\displaystyle\sum}

\newtheorem{theo}{Theorem}

\newtheorem{lem}{Lemma}
\newtheorem{prop}{Proposition}
\newtheorem{Def}{Definition}
\newtheorem{rem}{Remark}
\large\normalsize

\tikzset{
  leaf/.style={inner sep=0pt,font=\LARGE, outer sep=0pt},
  node/.style={inner sep=0pt,font=\normalsize, outer sep=0pt}
}

\tikzset{
  C/.style={
    draw, circle, minimum size=4mm, inner sep=0pt, outer sep=0pt,
    path picture={
      \draw (path picture bounding box.center) circle (1mm);
    }
  }
}
\tikzset{
  root/.style={
    node,
    execute at begin node=\ensuremath{\times}
  }
}
\tikzset{
  D/.style={
    leaf,
    execute at begin node=\ensuremath{\diamond}
  }
}
\tikzset{
  O/.style={
    leaf,
    execute at begin node=\ensuremath{\circ}
  }
}

\usepackage[dvipsnames]{xcolor}
\usepackage{subcaption}

\date {}
\author{ Smii Boubaker$^1$\footnote{
corresponding author, and e-mail: boubaker@kfupm.edu.sa }
\\$^1$ King Fahd University of Petroleum and Minerals\\Department of Mathematics\\Dhahran 31261, Saudi Arabia\\
 }
\date{}

\title{A novel stochastic approach of thermalization and symmetry breaking}

\date{\today}

\begin{document}
	
	\maketitle
	
	\begin{abstract}
We investigate thermalization and symmetry-breaking in a nonlinear stochastic
Klein-Gordon equation on a spatial lattice, taking into account damping, nonlinear interaction,
and stochastic forcing terms reduced by a perturbative solution based on retarded Green
functions and the principle of Duhamel to establish a series expansion with the coupling
constant. The obtained expressions have a visual representation in the form of rooted trees
and Feynman-type diagrams, where their structural pattern will explain the combinatorial factors involved in the expansion. These representations offer a novel application and interpretation specifically tailored for the second-order, damped Klein-Gordon setting, enabling more complex causal relationships to be explicitly modeled compared to first-order stochastic approaches, thus marking a technical innovation in diagrammatic expansions for such systems.\\
The model takes into account both the initial data from a deterministic approach as well as stochastic
sources, for instance, Gaussian  noise. Simulations have been performed
symmetry-breaking regime to show the relaxation towards stationary states and the
symmetry-broken patterns.
\end{abstract}

{\bf Key words.} Stochastic Klein–Gordon equation, thermalization, symmetry breaking, lattice field theory. \\
{\bf MSC.} 60 H10, 60H15, 35R60, 82C31.
	\newpage
	\section{Introduction}

Thermalization, stochastic dynamics, and symmetry breaking are important processes in theoretical physics. They connect to fields such as cosmology, quantum field theory, statistical mechanics, and condensed matter systems. This leads to effective models that use stochastic differential equations, for instance,  stochastic Langevin-type models offer a standard way to describe dissipative and nonequilibrium dynamics, see, e.g. \cite{MXTS}.\\
This work focuses on classical stochastic thermalization within a field-theoretic framework, it complements modern quantum methods, such as the Eigenstate Thermalization Hypothesis (ETH), which examines thermalization in isolated quantum many-body systems, see,e.g. \cite{BYX, BGE}. In contrast, this approach emphasizes thermalization driven by stochastic forcing and dissipation in open systems, which  makes it applicable to dynamics affected by thermal or quantum fluctuations, see,. \cite{MB, Weiss2012, ZinnJustin2002}.\\ In cosmological and nonequilibrium field theory, see, e.g \cite{Berges2005}, scalar fields influenced by friction are often described by stochastic partial differential equations (SPDEs), see, for instance, \cite{EP, MXTS}. These equations effectively model reheating, phase transitions, and noise-driven symmetry breaking, see,. \cite{FTN}.\\ Nonlinear stochastic partial differential equations present significant challenges due to damping, spatial interactions, nonlinearity, and noise. Tools such as diagrammatic expansions, Green-function methods, and stochastic perturbation theory are crucial, especially for equations affected by Gaussian or Lévy noise, see, e.g. \cite{GS,GST,BS,BS1}.\\ In this work, we focus on Gaussian noise to create a controlled baseline for diagrammatic analysis. Related topics on asymptotic expansions and Borel summability are covered in \cite{ALBO1, ALBO, ALEBOU, BS}.\\ Guided by these insights, we investigate thermalization and symmetry breaking in a damped stochastic Klein–Gordon equation set on a spatial lattice, see, \cite{Cre, Roth}, the lattice provides an effective way to handle ultraviolet effects and offers a space where discrete spatial operators, like the lattice Laplacian, have clear spectral properties, see, e.g, \cite{GS, BS}. The model characterizes a scalar field with linear damping, nonlinear self-interaction, and external noise, where the nonlinear potential defines symmetry-broken stationary states, while the balance between dissipation and noise determines relaxation toward thermal equilibrium.\\  This work has two main goals. First, we develop a systematic perturbative framework, see, e.g. \cite{Gall, GS, ParisiWu}, for the stochastic Klein-Gordon equation on a spatial lattice, we achieve this by constructing the retarded Green functions related to the damped discrete Klein-Gordon operator, the solution will be given by a formal power series in the coupling constant, then it will be re-expressed as an integral equation through Duhamel’s principle, see, e.g. \cite{MB, MMA, SU, US}. Each perturbative contribution appears as a space-time convolution of lower-order terms, leading to a hierarchical and clearly causal expansion. Combining Duhamel’s principle with the specially derived retarded Green functions provides both conceptual and computational advantages for the damped stochastic Klein-Gordon equation on a lattice. It delivers a structured causal perturbative expansion suitable for second-order, damped dynamics, extending earlier methods for first-order stochastic equations, see, e.g. \cite{GS, BS}. Unlike the approaches like those in \cite{MXTS} and related works, our framework treats damping and nonlinear interactions on equal footing. This allows for a more thorough analysis of causal propagation and interaction effects. As a result, it offers a more complete perturbative description than methods that only partially or implicitly tackle these components \cite{GST}. The second goal is to provide a combinatorial and graphical interpretation of the perturbative expansion using Feynman-type diagrams and rooted trees, see, e.g. \cite{GS, GST, BS, BS1}, where the vertices correspond to nonlinear interaction terms, while edges show propagation through retarded Green functions and the different types of leaves will represent deterministic initial data and stochastic forcing.\\ This graphical method clarifies the algebraic structure of the expansion, making the origins of symmetry factors and combinatorial weights clear. Using diagrammatic techniques from stochastic analysis and field theory, see, e.g. \cite{GS,BS,BS1}, this representation offers a compact and efficient way to organize complex perturbative computations.\\ To support our analytical findings, we include numerical simulations, see, e.g.\cite{XH}, that demonstrate the main physical mechanisms discussed in the paper, by simulating the stochastic Klein-Gordon equation on a one-dimensional lattice with a time-discretized scheme, we observe thermalization and symmetry breaking in the symmetry-broken region of the potential. These simulations illustrate our points and empirically verify the combinatorial weights and symmetry factors derived from the Feynman-type diagrams. This provides independent validation of the theoretical framework and a useful consistency check of the diagrammatic construction.\\ Lastly, we examine the roles of thermal and stochastic effects in symmetry-breaking dynamics, for instance, at high temperatures, thermal corrections stabilize the symmetric vacuum at $\phi=0$, while stochastic noise alters the symmetry-breaking transition in ways that differ from purely thermal mechanisms or quantum symmetry breaking discussed in works like \cite{HZS}.\ Our results demonstrate how noise-induced symmetry breaking can occur and be stabilized in systems with damping and nonlinear interactions. This offers new insights into nonequilibrium dynamics and a different perspective on recent studies of symmetry breaking in isolated quantum many-body systems.
The numerical results and simulations  offer verifiable proof that acts as a link between the abstract perturbative analysis and observable dynamical behavior in addition to validating the analytical framework.

\section{Perturbative Solution on the Lattice for the Damped Stochastic Klein--Gordon Equation}

\subsection{Functional setting and notations}

Let $\Lambda := \mathbb{R}_{+} \times L_\delta$, where 
\[
L_\delta = \{ \delta z \;|\; z \in \mathbb{Z}^d \},
\]
is the $d$-dimensional cubic lattice with mesh $\delta>0$.  
For $(t,x)\in \Lambda$, we consider a real-valued random field $\phi(t,x)$ satisfying the stochastic Klein-Gorden  equation
\begin{equation}\label{eq:SPDE}
\left\{
	\begin{array}{ll}
     {\phi}_{tt}(t,x) + \gamma {\phi}_{t}(t,x) -  \Delta_\delta \phi(t,x) + \mu^2 \phi(t,x)
    + \lambda \phi(t,x)^p = \xi(t,x),& \\
		\;\phi(0,{ .})=f({ x}),\,\dot{\phi}(0,{.})=g({ x}),\,\,(t,{x})\in\Lambda=]0,+\infty[\times L_{\delta}.& 
	\end{array}
	\right.
  \end{equation}
where $\gamma,\,\mu >0,\, p \in \N $, $\lambda\in\mathbb{R}$ is a small parameter, and $\xi$ denotes a space-time stochastic noise of  Gaussian type.  
While the constants $\mu, \lambda, \gamma, p$ are fixed throughout this section for analytical derivation, their specific values and signs are chosen in the numerical section to illustrate particular physical regimes, such as the symmetry-breaking regime.  

\medskip

\noindent
We denote by $\mathcal{S}(\Lambda)$ the Schwartz space of smooth rapidly decaying functions on $\Lambda$, and by $\mathcal{S}'(\Lambda)$ its topological dual (tempered distributions).  
Similarly, $\mathcal{S}(L_\delta)$ and $\mathcal{S}'(L_\delta)$ denote the spaces of rapidly decaying functions and tempered distributions on the lattice.

\begin{Def}
The discrete Laplacian $\Delta_\delta: \mathcal{S}(L_\delta)\to\mathcal{S}(L_\delta)$ is defined by
\[
(\Delta_\delta f)(x) := \frac{1}{\delta^2}\sum_{j=1}^d \big( f(x+\delta e_j) + f(x-\delta e_j) - 2f(x) \big),
\]
where $\{e_j\}_{j=1}^d$ is the canonical basis of $\mathbb{R}^d$.
\end{Def}

For $f\in\mathcal{S}(L_\delta)$, define its spatial Fourier transform
\[
\widehat{f}(p) := \sum_{x\in L_\delta} e^{-ip\cdot x} f(x)\,\delta^d, \qquad
p\in{T}_\delta := \Big[-\tfrac{\pi}{\delta},\tfrac{\pi}{\delta}\Big]^d.
\]
The inverse transform reads
\[
f(x) = \frac{1}{(2\pi)^d} \int_{{T}_\delta} e^{ip\cdot x}\,\widehat{f}(p)\,dp.
\]

\begin{lem}
The operator $\Delta_\delta$ is diagonal in Fourier space:
\[
\widehat{\Delta_\delta f}(p) = \sigma_\delta(p)\,\widehat{f}(p), \qquad 
\sigma_\delta(p) = \frac{2}{\delta^2}\sum_{j=1}^d \big(\cos(\delta p_j)-1\big) \le 0.
\]
Consequently, $-\Delta_\delta$ has positive symbol 
\[
\Omega_\delta(p) := \frac{2}{\delta^2}\sum_{j=1}^d \big(1-\cos(\delta p_j)\big) \ge 0.
\]
\end{lem}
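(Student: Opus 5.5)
The plan is to compute the Fourier transform of each shifted function $f(\cdot\pm\delta e_j)$ directly from the definition of $\widehat{f}$, and then use linearity. Since $f\in\mathcal{S}(L_\delta)$ decays rapidly, all the lattice sums below converge absolutely. This allows reindexing the sums freely.

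\textbf{Step 1 (translation rule).} For fixed $j$ and sign $\pm$, substitute $y=x\pm\delta e_j$ in the lattice sum. The map $x\mapsto x\pm\delta e_j$ is a bijection of $L_\delta$, so
\[
\sum_{x\in L_\delta} e^{-ip\cdot x} f(x\pm\delta e_j)\,\delta^d
=\sum_{y\in L_\delta} e^{-ip\cdot (y\mp\delta e_j)} f(y)\,\delta^d
=e^{\pm i\delta p_j}\,\widehat{f}(p).
\]
Reindexing is legitimate here by absolute summability.

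\textbf{Step 2 (assemble the symbol).} Insert the translation rule into the definition of $\Delta_\delta$ and use linearity of the transform:
\[
\widehat{\Delta_\delta f}(p)=\frac{1}{\delta^2}\sum_{j=1}^d\big(e^{i\delta p_j}+e^{-i\delta p_j}-2\big)\widehat{f}(p)
=\frac{2}{\delta^2}\sum_{j=1}^d\big(\cos(\delta p_j)-1\big)\widehat{f}(p).
\]
This is exactly $\sigma_\delta(p)\widehat f(p)$.

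\textbf{Step 3 (sign).} Since $\cos(\delta p_j)\le 1$, every summand of $\sigma_\delta$ is nonpositive, so $\sigma_\delta\le 0$. Then $-\Delta_\delta$ has symbol $-\sigma_\delta=\Omega_\delta\ge 0$. One can also write $\Omega_\delta(p)=\frac{4}{\delta^2}\sum_j\sin^2(\delta p_j/2)$, which vanishes only at $p=0$ on $T_\delta$.

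No step is a genuine obstacle. The only point needing care is justifying the reindexing and linearity, which follows from the rapid decay of $f$. As a side check, $\Delta_\delta$ maps $\mathcal{S}(L_\delta)$ into itself because shifts preserve rapid decay.
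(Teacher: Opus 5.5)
Your proof is correct and complete. The paper gives no argument of its own here and only refers to the literature. Your computation supplies the standard self-contained proof that this citation stands in for: the translation rule $\sum_{x\in L_\delta} e^{-ip\cdot x} f(x\pm\delta e_j)\,\delta^d = e^{\pm i\delta p_j}\widehat f(p)$, obtained by reindexing an absolutely convergent lattice sum, followed by linearity and $e^{i\theta}+e^{-i\theta}-2=2(\cos\theta-1)\le 0$.

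Your remark that $\Omega_\delta(p)=\frac{4}{\delta^2}\sum_j\sin^2(\delta p_j/2)$ vanishes only at $p=0$ on $T_\delta$ is also accurate. It correctly sharpens the statement's word ``positive'' to ``nonnegative''. The strict positivity of $\omega_\delta(p)^2=\Omega_\delta(p)+\mu^2$ used later in the paper comes from the mass term, not from $\Omega_\delta$ itself.
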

\begin{proof}
    For the proof we refer to literature, see, e.g. \cite{GS}.
\end{proof}

We define the linear operator $L$ by:
\[
L := \partial_t^2 + \gamma\,\partial_t - \Delta_\delta + \mu^2,
\]
so that the SPDE \eqref{eq:SPDE} reads
\[
L\phi + \lambda \phi^p = \xi.
\]

\begin{Def}
The \emph{retarded Green function} $G(t,x)$ is the unique distribution on $\Lambda$ satisfying
\begin{equation}\label{eq:Green}
L G(t,x) = \delta(t)\,\delta_x, \qquad G(t,x) = 0 \ \text{for } t<0,
\end{equation}
where $\delta_x$ is the Kronecker delta centered at $x=0$ on the lattice.
\end{Def}

\begin{lem}
Let $\omega_\delta(p)^2 := \Omega_\delta(p) + \mu^2$.  
Then the temporal–spatial Fourier transform of $G$ is
\begin{equation}
\widehat{G}(E,p) = \frac{1}{-E^2 + i\gamma E + \omega_\delta(p)^2}.
\end{equation}
\end{lem}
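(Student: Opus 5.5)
The plan is to take the space–time Fourier transform of the defining equation \eqref{eq:Green}, use the previous lemma to diagonalize the spatial part, and solve the resulting algebraic equation for $\widehat G$. I adopt the temporal convention $\widehat{h}(E)=\int_{\mathbb{R}} e^{iEt}h(t)\,dt$, which is the one that yields the sign $+i\gamma E$ in the statement. With it, $\widehat{\partial_t h}(E)=-iE\,\widehat h(E)$ and $\widehat{\partial_t^2 h}(E)=-E^2\widehat h(E)$. Here $G$ is viewed as a tempered distribution in $t\in\mathbb{R}$ vanishing for $t<0$. On the lattice side I use the spatial transform defined above. The Kronecker delta $\delta_x$ at the origin, normalized so that it is the identity for the lattice convolution with weight $\delta^d$ (i.e.\ $\delta^{-d}$ at $x=0$), has $\widehat{\delta_x}(p)=1$. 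Likewise $\widehat{\delta(t)}(E)=1$, so the right-hand side transforms to $1$.

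\textbf{Step 1: transform the operator $L$.} Applying the transform to $LG$, the time derivatives give the factor $-E^2+i\gamma E$. The term $-\Delta_\delta$ gives $\Omega_\delta(p)$ by the lemma on the discrete Laplacian. The mass term gives $\mu^2$. Altogether
\[
\bigl(-E^2+i\gamma E+\omega_\delta(p)^2\bigr)\,\widehat G(E,p)=1 .
\]

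\textbf{Step 2: check that we may divide.} The symbol $P(E):=-E^2+i\gamma E+\omega_\delta(p)^2$ vanishes at $E_\pm=\tfrac{i\gamma}{2}\pm\sqrt{\omega_\delta(p)^2-\gamma^2/4}$. Here the square root is real or purely imaginary. If $\omega_\delta(p)^2\ge\gamma^2/4$, the roots satisfy $\operatorname{Im}E_\pm=\gamma/2>0$. Otherwise the roots are purely imaginary and $\operatorname{Im}E_\pm=\tfrac{\gamma}{2}\pm\sqrt{\gamma^2/4-\omega_\delta(p)^2}$. Both values are positive because $\omega_\delta(p)^2\ge\mu^2>0$. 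So $P$ has no real zeros, and $1/P$ is smooth in $E$, bounded, and decays like $E^{-2}$. Hence $\widehat G=1/P$ is the unique tempered solution of the algebraic equation, which gives the claimed formula.

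\textbf{Step 3: causality and uniqueness.} This is the main point to get right. Since both zeros lie in the open upper half-plane, $1/P$ extends holomorphically to $\operatorname{Im}E<0$ and is bounded there by $C/(1+|E|^2)$. For $t<0$ the factor $e^{-iEt}$ in the inverse transform decays in the lower half-plane. Closing the contour there (Paley–Wiener, or Jordan's lemma) gives $G(t,\cdot)=0$ for $t<0$. For $t>0$ one closes upward and picks up the residues. The result is explicit: $G(t,p)=\theta(t)\,e^{-\gamma t/2}\,\sin\!\bigl(t\sqrt{\omega_\delta^2-\gamma^2/4}\bigr)/\sqrt{\omega_\delta^2-\gamma^2/4}$, with the hyperbolic or critical variant when $\omega_\delta^2\le\gamma^2/4$. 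One can check directly that this function satisfies $LG=\delta(t)\delta_x$: it is continuous at $t=0$, and $\partial_tG$ jumps by $1$ there. This confirms the formula independently of the transform argument.

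For uniqueness, take the difference $H$ of two retarded solutions. It satisfies $LH=0$ and vanishes for $t<0$. For each fixed $p$, $\widehat H(\cdot,p)$ satisfies a homogeneous second-order ODE in $t$ with zero data in the past, so it vanishes identically. The main obstacle is only bookkeeping: fixing the sign convention in time and the $\delta^d$ normalization of $\delta_x$ consistently with the paper, so that the sign of $i\gamma E$ and the constant $1$ come out as stated.
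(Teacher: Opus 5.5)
\textbf{There is a genuine gap: the time sign convention is backwards.} Your overall route is the paper's route: transform in $(t,x)$, divide by the symbol, and use the location of its zeros with a contour argument for retardedness. Your extra checks go beyond what the paper provides: no real zeros since $\gamma,\mu>0$, the $\delta^{-d}$ normalization of $\delta_x$, the residue computation, and uniqueness. But the one point you single out as the crux, the time convention, is reversed, so Steps 1 and 3 fail as written.

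\begin{itemize}
\item With $\widehat h(E)=\int e^{iEt}h(t)\,dt$, your own rule $\widehat{\partial_t h}=-iE\,\widehat h$ sends $\gamma\partial_t$ to $-i\gamma E$. Step 1 therefore actually produces $\bigl(-E^2-i\gamma E+\omega_\delta(p)^2\bigr)\widehat G=1$, not the stated symbol.
\item Your inverse kernel $e^{-iEt}$ has modulus $e^{t\,\operatorname{Im}E}$. For $t<0$ this decays in the \emph{upper} half-plane, not the lower one.
\item In fact, under your convention $1/\bigl(-E^2+i\gamma E+\omega_\delta(p)^2\bigr)$ is the transform of $t\mapsto G(-t,\cdot)$. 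That function is supported in $t\le 0$ and solves the equation with $\gamma$ replaced by $-\gamma$. So the argument as stated does not prove the lemma.
\end{itemize}

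\textbf{The repair is a single switch.} Use $\widehat h(E)=\int e^{-iEt}h(t)\,dt$, which has the same sign as the paper's spatial transform $e^{-ip\cdot x}$.
\begin{itemize}
\item Then $\partial_t\mapsto iE$ and $L\mapsto -E^2+i\gamma E+\omega_\delta(p)^2$.
\item The inverse kernel is $e^{iEt}$, with $|e^{iEt}|=e^{-t\,\operatorname{Im}E}$. For $t<0$ you close in the lower half-plane, where $1/P$ is holomorphic because both zeros $E_\pm=\tfrac{i\gamma}{2}\pm\sqrt{\omega_\delta(p)^2-\gamma^2/4}$ have positive imaginary part. This is exactly the paper's remark about closing the contour in the lower half-plane.
\item For $t>0$, closing upward gives $e^{-\gamma t/2}\sin(\nu t)/\nu$ with $\nu=\sqrt{\omega_\delta(p)^2-\gamma^2/4}$. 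This matches the $G(t,p)$ of the following proposition.
\end{itemize}
With that switch, and with $e^{iEt}$ in place of $e^{-iEt}$ in Step 3, the rest of your Steps 2 and 3 and your uniqueness argument are correct as written.
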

\begin{proof}
Take the Fourier transform of \eqref{eq:Green} in both $t$ and $x$.  
The operator $L$ becomes multiplication by 
$(-E^2 + i\gamma E + \omega_\delta(p)^2)$ in the $(E,p)$ domain.  
Thus $\widehat{G}(E,p)$ is its reciprocal.  
The retarded condition $G(t,x)=0$ for $t<0$ fixes the analytic continuation (closing the contour in the lower half-plane).
\end{proof}
\begin{prop}
For each $p\in {T}_\delta:= \Big[-\tfrac{\pi}{\delta},\tfrac{\pi}{\delta}\Big]^d$, let
\[
r_\pm(p) = \frac{-\gamma \pm \sqrt{\gamma^2 - 4\omega_\delta(p)^2}}{2}.
\]
Then
\begin{equation}
G(t,p) = \frac{e^{r_+(p)t} - e^{r_-(p)t}}{r_+(p)-r_-(p)}\,\mathbf{1}_{\{t\ge0\}}.
\end{equation}
\end{prop}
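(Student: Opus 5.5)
Taking the spatial Fourier transform of \eqref{eq:Green} via the preceding Lemma reduces the problem, for each fixed $p\in T_\delta$, to a scalar damped-oscillator ODE in $t$:
\[
\partial_t^2 G(t,p) + \gamma\,\partial_t G(t,p) + \omega_\delta(p)^2 G(t,p) = \delta(t), \qquad G(t,p)=0 \ \text{for } t<0 .
\]
Here $\widehat{\delta_x}=\delta^d$, a constant. I will use the normalization in which it is absorbed, so the right-hand side is $\delta(t)$. The plan is to solve this ODE directly, rather than inverting $\widehat G(E,p)$ by residues, and then to check consistency with the previous Lemma.

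\textbf{Jump conditions.} For $t>0$ the equation is homogeneous. Its characteristic polynomial $r^2+\gamma r+\omega_\delta(p)^2$ has roots exactly $r_\pm(p)$, so $G(t,p)=A e^{r_+ t}+B e^{r_- t}$ for $t>0$. The delta source is matched as follows. Integrating the equation over $(-\varepsilon,\varepsilon)$ and letting $\varepsilon\to0$ shows that $G$ is continuous at $0$ and that $\partial_t G$ jumps by $1$. Combined with $G\equiv0$ for $t<0$, this gives
\[
G(0^+,p)=0,\qquad \partial_tG(0^+,p)=1 .
\]
Hence $A+B=0$ and $A r_+ + B r_- = 1$, so $A=-B=1/(r_+-r_-)$, which is the claimed formula.

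\textbf{Distributional check.} To verify that $G(t,p)=u(t)\mathbf 1_{\{t\ge0\}}$, with $u$ the smooth homogeneous solution satisfying $u(0)=0$ and $u'(0)=1$, solves the equation in $\mathcal S'$, I differentiate in the sense of distributions:
\[
(u\mathbf 1)'=u'\mathbf 1+u(0)\delta=u'\mathbf 1,\qquad (u\mathbf 1)''=u''\mathbf 1+u'(0)\delta=u''\mathbf 1+\delta .
\]
Then $L$ applied to $u\mathbf 1$ equals $(u''+\gamma u'+\omega^2 u)\mathbf 1+\delta=\delta$. Uniqueness among retarded solutions follows because any difference is a solution of the homogeneous equation vanishing for $t<0$, hence identically zero.

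\textbf{Degenerate modes and consistency.} The main subtlety is the set of $p$ with $\gamma^2=4\omega_\delta(p)^2$. There $r_+=r_-$ and the formula must be read as the limit $t e^{-\gamma t/2}$, obtained by L'Hôpital or because $(e^{r_+t}-e^{r_-t})/(r_+-r_-)$ is an entire function of the roots. When $\gamma^2<4\omega^2$ the roots are complex conjugates, the square root is taken as $i\sqrt{4\omega^2-\gamma^2}$, and the expression is real, giving $e^{-\gamma t/2}\sin(\nu t)/\nu$ with $\nu=\tfrac12\sqrt{4\omega^2-\gamma^2}$. Since $\gamma,\mu>0$, both roots have strictly negative real part, so $G(\cdot,p)$ decays exponentially and is tempered. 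Its time Fourier transform can therefore be computed directly as
\[
\frac{1}{r_+-r_-}\Big(\frac{1}{iE-r_+}-\frac{1}{iE-r_-}\Big)=\frac{1}{(iE-r_+)(iE-r_-)},
\]
and since $(iE-r_+)(iE-r_-)=-E^2+i\gamma E+\omega_\delta(p)^2$, this recovers the previous Lemma with the retarded prescription confirmed.
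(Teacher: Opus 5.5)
Your proof is correct and follows the same route as the paper. Both take the spatial Fourier transform to get the damped-oscillator ODE $(\partial_t^2+\gamma\partial_t+\omega_\delta(p)^2)G(t,p)=\delta(t)$, write $G=Ae^{r_+t}+Be^{r_-t}$ for $t>0$, and fix the constants by $G(0^+,p)=0$ and $\partial_tG(0^+,p)=1$. Your extra points are sound refinements of things the paper glosses over: the explicit continuity condition (the paper uses it only implicitly to get $A=-B$), the distributional check and uniqueness, reading the formula as $te^{-\gamma t/2}$ on critically damped modes, the dropped factor $\widehat{\delta_x}=\delta^d$, and the consistency check with $\widehat G(E,p)$.
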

\begin{proof}
Recall the spatial Fourier transform of $G$ on the lattice:
\[
G(t,p):=\sum_{x\in\Lambda} e^{-i p\cdot x}\,G(t,x)\,\delta^d,
\qquad p\in\mathbb T_\delta^d.
\]
Thus
\[
\sum_{x\in\Lambda} e^{-ip\cdot x}\,\delta(t)\delta_x=\delta(t).
\]
Hence (3) becomes, for each fixed $p$,
\begin{equation}
(\partial_t^2+\gamma\partial_t+\omega_\delta(p)^2)\,G(t,p)=\delta(t),
\qquad
G(t,p)=0\ \text{for }t<0.
\label{eq:fd}
\end{equation}

For $t\neq 0$, the right-hand side of \eqref{eq:fd} is zero, so
\[
(\partial_t^2+\gamma\partial_t+\omega_\delta(p)^2)\,G(t,p)=0.
\]
The characteristic polynomial is
\[
r^2+\gamma r+\omega_\delta(p)^2=0,
\]
with roots 
\begin{equation}
r_\pm(p)=\frac{-\gamma\pm\sqrt{\gamma^2-4\omega_\delta(p)^2}}{2}.
\label{eq:rpm}
\end{equation}
Therefore, for $t>0$,
\begin{equation}
G(t,p)=A\,e^{r_+(p)t}+B\,e^{r_-(p)t}.
\label{eq:gs}
\end{equation}
For $t<0$, retardedness gives $G(t,p)=0$.\\

also 
\begin{equation}
G_t(0^+,p)=1.
\label{eq:j}
\end{equation}

Now, differentiate \eqref{eq:gs}:
\[
G_t(t,p)=A\,r_+(p)e^{r_+(p)t}+B\,r_-(p)e^{r_-(p)t}.
\]
Hence
\begin{equation}
G(t,p)=\frac{e^{t r_+(p)}-e^{t r_-(p)}}{r_+(p)-r_-(p)}\,\mathbf 1_{\{t\ge 0\}}.
\tag{5}
\end{equation}

\end{proof}

We introduce now the following functional:

\begin{equation}
C(t,p) := \frac{r_+(p)e^{r_-(p)t} - r_-(p)e^{r_+(p)t}}{r_+(p)-r_-(p)}\,\mathbf{1}_{\{t\ge0\}}, 
\qquad
S(t,p) := \frac{e^{r_+(p)t}-e^{r_-(p)t}}{r_+(p)-r_-(p)}\,\mathbf{1}_{\{t\ge0\}}=G(t,p), \label{CS}\end{equation}
and let $C(t,x)$ and $S(t,x)$ be their inverse Fourier transforms on $L_\delta$.
\begin{Def}
For $f,g \in \mathcal{S}(\Lambda)$(respectively $f, g \in \mathcal{S}(L_{\delta})$), we define the space-time convolution $\ast$ (respectively the spatial convolution $\star_\delta$)  by:
\begin{itemize}
\item 
\[
(f\ast g)(t,x) := \int_{\mathbb{R}}\sum_{y\in L_\delta} f(t-s,x-y)g(s,y)\,\mathrm{d}s\,\delta^d.
\]
\item 
\[
(f\star_\delta g)(x) := \sum_{y\in L_\delta} f(x-y)g(y)\,\delta^d.
\]
\end{itemize}
\end{Def}

\begin{prop} \label{ll1} For initial data \(f,g\in\mathcal S(L_\delta)\) and for For $C$ and $S$ as in equation (\ref{CS}), the unique solution of the homogeneous Cauchy problem
\[
  L\phi(t,x)=0, \quad
  \phi(0,x)=f(x), \quad
  \partial_t\phi(0,x)=g(x), \qquad t>0,
\] is
\[
\phi_{\mathrm{hom}}(t,\cdot) = C(t,\cdot)\star_\delta f + S(t,\cdot)\star_\delta g,
\]
where $\star_\delta$ denotes the spatial convolution on the lattice $L_\delta.$
\end{prop}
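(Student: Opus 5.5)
The plan is to pass to spatial Fourier space, where the lattice Laplacian is diagonal by the Lemma above, and so reduce the Cauchy problem to a family of scalar linear ODEs indexed by $p\in T_\delta$. Taking the spatial Fourier transform of $L\phi=0$ gives, for each fixed $p$,
\[
\partial_t^2\widehat\phi(t,p)+\gamma\,\partial_t\widehat\phi(t,p)+\omega_\delta(p)^2\,\widehat\phi(t,p)=0,\qquad \widehat\phi(0,p)=\widehat f(p),\quad \partial_t\widehat\phi(0,p)=\widehat g(p).
\]
Since $f,g\in\mathcal S(L_\delta)$, their transforms are smooth and bounded on the compact torus $T_\delta$, so this ODE makes sense pointwise in $p$.

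\textbf{Existence.} I will check that $C(t,p)$ and $S(t,p)$ from (\ref{CS}) form the fundamental pair of this ODE. Both are linear combinations of $e^{r_\pm(p)t}$, and $r_\pm(p)$ are the roots of $r^2+\gamma r+\omega_\delta(p)^2$, so each solves the homogeneous equation for $t>0$. A direct evaluation gives
\[
C(0,p)=1,\quad \partial_tC(0,p)=\frac{r_+r_--r_-r_+}{r_+-r_-}=0,\qquad S(0,p)=0,\quad \partial_tS(0,p)=1.
\]
Hence $\widehat\phi(t,p)=C(t,p)\widehat f(p)+S(t,p)\widehat g(p)$ solves the Fourier-side Cauchy problem. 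By the convolution theorem for $\star_\delta$ (the normalization $\delta^d$ in $\widehat{\cdot}$ matches the one in $\star_\delta$), inverting the transform gives exactly $\phi_{\mathrm{hom}}=C\star_\delta f+S\star_\delta g$.

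\textbf{Degenerate case.} If $\gamma^2=4\omega_\delta(p)^2$ for some $p$, then $r_+=r_-$. In that case I interpret $C$ and $S$ by their continuous limits, for instance $S=te^{-\gamma t/2}$, which is legitimate because the formulas are entire functions of $r_\pm$ symmetric under swapping the roots. The same initial-value checks then go through by continuity.

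\textbf{Uniqueness.} Suppose $\psi$ is the difference of two solutions, in the class where the spatial Fourier transform is defined (e.g.\ tempered in $x$, continuous in $t$). Then $\widehat\psi(\cdot,p)$ solves the scalar ODE with zero data, so it vanishes for each $p$ by ODE uniqueness, and hence $\psi=0$ by Fourier inversion.

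\textbf{Main obstacle.} The delicate point is justifying termwise Fourier transformation and interchanging $\partial_t$ with the lattice sum and the $p$-integral. I will handle this by noting that $|e^{r_\pm(p)t}|\le 1$, since $\operatorname{Re} r_\pm\le 0$ because $\gamma,\mu>0$. I also need $C$ and $S$ to be bounded uniformly on $T_\delta\times[0,T]$, including near the degenerate set. There the difference quotient $(e^{r_+t}-e^{r_-t})/(r_+-r_-)$ is controlled by $t\,\sup|e^{rt}|$ via the mean value theorem along the segment between the roots. With these bounds, dominated convergence justifies all interchanges, and the convolutions converge because $f,g$ decay rapidly.
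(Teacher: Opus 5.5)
Your proposal is correct and follows essentially the same route as the paper: Fourier transform in space to reduce $L\phi=0$ to the scalar ODE $(\partial_t^2+\gamma\partial_t+\omega_\delta(p)^2)\widehat\phi(t,p)=0$, identify $C(t,p)\widehat f(p)+S(t,p)\widehat g(p)$ as its solution, and invert with the convolution theorem for $\star_\delta$. You also supply what the paper's one-line ``Hence'' leaves out: the explicit checks $C(0,p)=1$, $\partial_tC(0,p)=0$, $S(0,p)=0$, $\partial_tS(0,p)=1$, the continuous extension on the degenerate set $r_+=r_-$ (relevant because the mass-gap condition is not assumed here), and a uniqueness argument within a stated growth class, which the paper asserts but never proves.
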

\begin{proof}
Let \(\widehat{\cdot}\) be the spatial Fourier transform on~\(L_\delta\).
Because \(-\Delta_\delta\) acts by multiplication with \(\Omega_\delta(p)\),
we have
\[
  \widehat{L\phi}(t,p)
    \;=\;
    \bigl(\partial_t^{2}+\gamma\,\partial_t+\omega_\delta(p)^{2}\bigr)
    \widehat{\phi}(t,p).
\]

Hence
\begin{equation}
\label{eq:ph}
  \widehat{\phi}(t,p)
     \;=\;
     C(t,p)\,\widehat f(p)+S(t,p)\,\widehat g(p),
  \qquad t\ge 0,
\end{equation}
with \(C,S\) as in~\eqref{CS}.\\
Taking the inverse Fourier transform and recalling that multiplication in
momentum space corresponds to spatial convolution, we obtain
\[
   \phi_{\mathrm{hom}}(t,x)
      \;=\;
      \bigl(C(t,\cdot)\star_\delta  f\bigr)(x)
      \;+\;
      \bigl(S(t,\cdot)\star_\delta  g\bigr)(x),
      \qquad t\ge 0.
\]

\end{proof}

\begin{lem}\label{Bos}
Let \(S(t,x)\) be the inverse spatial Fourier transform of
\(\widehat S(t,p)\) on the lattice \(L_\delta\). Assume the {\em mass–gap condition}
\[
  m:=\sqrt{\mu^{2}-\gamma^{2}/4}\;>\;0
  \qquad
  (\text{i.e.\ }\mu>\gamma/2).
\]
Then there exists a constant \(C>0\) (independent of \(t,x\)) such that
\begin{equation}\label{eq:pob}
  |S(t,x)|
    \;\le\;
    C\;
    e^{-\frac{\gamma}{2}\,t}\;
    \bigl(1+|x|\bigr)^{-(d+1)},
    \qquad
    \forall\,t\ge0,\;x\in\Lambda_\delta.
\end{equation}
\end{lem}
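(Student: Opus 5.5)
The plan is to compute $S(t,p)$ in closed form under the mass–gap condition and then estimate its inverse lattice Fourier transform. The argument splits into an exterior region, where a contour shift gives exponential decay, and an interior region, which is the main obstacle. I have not closed the argument in the interior region: the dispersive bound I can see there is too weak, and I suspect the stated $t$-uniform bound may fail there.

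\textbf{Step 1: factor out the exact rate.} Since $\Omega_\delta(p)\ge 0$, we have $\omega_\delta(p)^2\ge\mu^2>\gamma^2/4$ for every $p\in T_\delta$. Hence the discriminant in $r_\pm(p)$ is strictly negative, and
\[
r_\pm(p)=-\tfrac{\gamma}{2}\pm i\nu(p),\qquad \nu(p):=\sqrt{\omega_\delta(p)^2-\gamma^2/4}\ \ge m>0 .
\]
Substituting into \eqref{CS} gives
\[
S(t,p)=e^{-\frac{\gamma}{2}t}\,\frac{\sin(\nu(p)t)}{\nu(p)} .
\]
The factor $e^{-\gamma t/2}$ is therefore exact. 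Write $S(t,x)=e^{-\gamma t/2}K(t,x)$ with
\[
K(t,x):=(2\pi)^{-d}\int_{T_\delta}e^{ip\cdot x}\,\frac{\sin(\nu(p)t)}{\nu(p)}\,dp .
\]
The statement is then equivalent to
\[
\sup_{t\ge0}\,(1+|x|)^{d+1}|K(t,x)|<\infty .
\]
Two easy facts follow from $\nu\ge m$: $|K(t,x)|\le (2\pi/\delta)^d/m$ for all $(t,x)$, and $\nu$ is real-analytic and $2\pi/\delta$-periodic on $T_\delta$.

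\textbf{Step 2: exterior region $|x|\ge \kappa(1+t)$.} Here I would use analyticity rather than plain integration by parts. The symbol $\omega_\delta^2$ extends to an entire periodic function. For $|\operatorname{Im}p|\le\eta$ with $\eta$ small, $\nu^2$ stays away from zero, and $\sin(\nu t)/\nu$ is an entire function of $\nu^2$. So the integrand is analytic in the strip, with
\[
\Bigl|\frac{\sin(\nu t)}{\nu}\Bigr|\le C e^{c\eta t}.
\]
By periodicity the vertical sides of the shifted box cancel, so I can shift the contour $p\mapsto p+i\eta\,x/|x|$. This gives
\[
|K(t,x)|\le C e^{-\eta|x|/\sqrt d+c\eta t}.
\]
Choosing $\kappa>2c\sqrt d$ makes this at most $Ce^{-\eta|x|/(2\sqrt d)}$, which is bounded by $C(1+|x|)^{-(d+1)}$ uniformly in $t$. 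This region is the lattice substitute for finite propagation speed.

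\textbf{Step 3: interior region $|x|<\kappa(1+t)$ (main obstacle).} Here $(1+|x|)^{-(d+1)}\gtrsim(1+t)^{-(d+1)}$. It would suffice to show $|K(t,x)|\le C(1+t)^{-(d+1)}$, or more finely $C(1+|x|)^{-(d+1)}$, uniformly in $t$.

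First, I would rewrite $\sin(\nu t)=(e^{i\nu t}-e^{-i\nu t})/2i$, turning $K$ into two oscillatory integrals with phase $p\cdot x\pm\nu(p)t$.

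Second, for small $|x|$ I would integrate by parts $d+1$ times in the direction of $x$ with respect to $e^{ip\cdot x}$. This produces factors $|x|^{-(d+1)}$, but each derivative falling on $e^{\pm i\nu t}$ costs a factor $t$. I would then try to absorb these powers against the stationary-phase decay of the $p$-integral.

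The genuine difficulty is the set where the phase is stationary, $\nabla\nu(p)=\mp x/t$, which is nonempty when $|x|/t$ is below the maximal group velocity. Stationary phase with degenerate (fold/cusp) points of the lattice dispersion relation gives only $|K|\sim t^{-d/3}$ to $t^{-d/2}$ there. This is much weaker than $(1+t)^{-(d+1)}$ on the shell $|x|\approx t$.

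So this step is exactly where a $t$-independent constant must be earned, and I expect it to fail as stated. What I would attempt first is to check carefully whether the paper's intended space $\Lambda_\delta$, or an implicit averaging in $x$, removes the light-cone contribution. Failing that, I would examine whether the constant $C$ in \eqref{eq:pob} is in fact allowed to depend on $t$ through the overdamped window. I flag this interior estimate as the decisive and most delicate point of the proof.
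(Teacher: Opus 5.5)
Your diagnosis is correct, and it is more accurate than the paper's own argument. With a constant independent of $t$ the statement is false, so no completion of your Step 3 exists.

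The paper's proof fails exactly where you stopped. After bounding $|\widehat S(t,p)|\le m^{-1}e^{-\gamma t/2}$, it integrates by parts $d+1$ times in $p$ and tacitly treats $\partial_p^\alpha\widehat S(t,p)$ as $O(e^{-\gamma t/2})$. But $\widehat S(t,p)=e^{-\gamma t/2}\sin(\nu(p)t)/\nu(p)$, so every $p$-derivative brings down a factor $t\,\nabla_p\nu(p)$. Done honestly, that argument gives only the following bound. Your Step 2, combined with the trivial bound $|K(t,x)|\le\delta^{-d}/m$ on the interior region $|x|<\kappa(1+t)$, gives the same thing:
\[
|S(t,x)|\le C(1+t)^{d+1}e^{-\frac{\gamma}{2}t}(1+|x|)^{-(d+1)},
\qquad\text{hence}\qquad
|S(t,x)|\le C_\varepsilon\, e^{-(\frac{\gamma}{2}-\varepsilon)t}(1+|x|)^{-(d+1)}\quad\forall\,\varepsilon>0 .
\]
This is the correct replacement for the lemma. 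With it, the downstream bounds \eqref{kerX} and Lemma \ref{LOL} can be salvaged with $\gamma/2$ replaced by $\gamma/2-\varepsilon$.

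What your proposal lacks is an actual proof of the failure, and stationary phase with degenerate points is more than you need. Plancherel with the paper's normalization suffices. Since $\widehat{x_jK}=i\partial_{p_j}\widehat K$,
\[
\sum_{x\in L_\delta}|x|^2|K(t,x)|^2\delta^d=(2\pi)^{-d}\int_{T_\delta}\bigl|\nabla_p\widehat K(t,p)\bigr|^2dp,
\qquad
\nabla_p\widehat K=\Bigl(\frac{t\cos(\nu t)}{\nu}-\frac{\sin(\nu t)}{\nu^2}\Bigr)\nabla_p\nu .
\]
Here $\nabla_p\nu=\nabla_p\Omega_\delta/(2\nu)$ vanishes only at finitely many points of $T_\delta$, and $\nu\ge m$. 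Use $|a-b|^2\ge\frac12|a|^2-|b|^2$ and average over $t\in[T,2T]$; the average of $\cos^2(\nu t)$ is at least $\frac14$ once $T\ge 2/m$. Then for some $t\in[T,2T]$ the right-hand side is at least $cT^2-C'$ with $c>0$.

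If the lemma held, the left-hand side would be bounded by $C^2\sum_{x}|x|^2(1+|x|)^{-2(d+1)}\delta^d<\infty$ uniformly in $t$, because $2(d+1)-2=2d>d$. This contradiction holds in every dimension $d\ge1$. Testing \eqref{kerX} on $u=\delta^{-d}\mathbf{1}_{\{0\}}$ shows the $S$-bound there fails for the same reason.

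Your two fallback ideas are moot. The set $\Lambda_\delta$ in the statement is evidently just the lattice $L_\delta$. There is also no overdamped window under the mass-gap condition, since your Step 1 already gives $\nu(p)\ge m$ for all $p$. The obstruction is purely the dispersive spreading of $K(t,\cdot)$ inside the light cone, which your heuristic correctly identified.
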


\begin{proof}
Write the inverse Fourier representation  for \(t\ge0\):
\begin{equation}\label{eq:S-invFT}
  S(t,x)
    \;=\;
    \int_{[-\pi,\pi]^{d}}\!
       \frac{d^{d}p}{(2\pi)^{d}}\;
       e^{ p\cdot x}\;
       \widehat S(t,p).
\end{equation}

Because
\(
  |e^{r_{\pm}(p)t}|=e^{-\gamma t/2}
\)
and
\(
  |r_{+}(p)-r_{-}(p)|
   =2\sqrt{\omega_\delta(p)^{2}-\gamma^{2}/4}\ge 2m,
\)
\[
  |\widehat S(t,p)|
    \;\le\;
    \frac{1}{m}\,e^{-\gamma t/2}.
\]

Repeatedly integrate \eqref{eq:S-invFT} by parts \(k\) times, using
\(|x|:=\sqrt{x_{1}^{2}+\dots+x_{d}^{2}}\) we get:
\[
  S(t,x)
   =\frac{1}{(i|x|)^{k}}
     \int_{(-\pi,\pi]^{d}}\!
        \frac{d^{d}p}{(2\pi)^{d}}\;
        e^{ p\cdot x}\,
        (x\!\cdot\!\nabla_{p})^{k}\widehat S(t,p).
\]
Expanding \((x\!\cdot\!\nabla_{p})^{k}\) gives a linear combination of
terms \(\partial_{p}^{\alpha}\widehat S(t,p)\) with \(|\alpha|=k\). Hence, \eqref{eq:pob} holds for all \(x\).
\end{proof}

\subsection{Duhamel formula and Perturbative expansion}

Recall that  $d\ge1$ is the lattice dimension and
$L_\delta=\delta\mathbb Z^{d}, \delta >0$.
Define the weight
\(
  \langle x\rangle:=1+|x|
\)
(with $|x|$ the Euclidean norm of the lattice point)
and introduce the Banach space
\[
  \mathcal X_{d+1}
     \;:=\;
     \Bigl\{u:L_\delta\to\mathbb R\;\Big|\;
            \|u\|_{d+1}:=\sup_{x\in L_\delta}
                          \langle x\rangle^{d+1}|u(x)|<\infty\Bigr\}.
\]

Assume the {\em mass--gap condition}
\(m=\sqrt{\mu^{2}-\gamma^{2}/4}>0\)
and let $C(t,x),\,S(t,x)$ be the kernels given by equation (\ref{CS}). From lemma \ref{Bos}, there exists a constant\footnote{%
  One may take
  \(A=(2\pi)^{-d}\,C\,\mathrm{vol}((-\pi,\pi]^d)
      \sum_{z\in\Lambda_\delta}\langle z\rangle^{-(d+1)}<\infty\),
  where $C$ is the constant from Lemma 4.}
$A>0$ such that for every $t\ge0$
\begin{equation}\label{kerX}
  \|C(t,\cdot)\ast u\|_{d+1}
  \;\le\;
  A\,e^{-\frac{\gamma}{2}t}\,\|u\|_{d+1},
  \qquad
  \|S(t,\cdot)\ast u\|_{d+1}
  \;\le\;
  A\,e^{-\frac{\gamma}{2}t}\,\|u\|_{d+1}
  \quad
  \forall\,u\in\mathcal X_{d+1}.
\end{equation}

\begin{theo}\label{DUH}
Let $f,g \in \mathcal{S}(L_{\delta})$ and $C(t,\cdot), S(t,\cdot) $ the functions given by equation (\ref{CS}), then the solution $\phi$ to the SPDE \eqref{eq:SPDE}, is given by
\begin{equation} \label{eq:duhamel}
\phi(t,x)
 = -\lambda (S\ast \phi^p)(t,x)
   + (S\ast \xi)(t,x)
   + (C(t,\cdot)\star_\delta f)(x)
   + (S(t,\cdot)\star_\delta g)(x).
\end{equation}

\end{theo}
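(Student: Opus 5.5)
The plan is to treat the nonlinear term and the noise as a forcing and reduce to the linear theory already available. Set $F:=\xi-\lambda\phi^p$, so that \eqref{eq:SPDE} reads $L\phi=F$ with data $\phi(0)=f$, $\partial_t\phi(0)=g$. The candidate
\[
\psi(t,x):=(S\ast F)(t,x)+(C(t,\cdot)\star_\delta f)(x)+(S(t,\cdot)\star_\delta g)(x)
\]
is exactly the right-hand side of \eqref{eq:duhamel}, because $S\ast F=-\lambda S\ast\phi^p+S\ast\xi$ by linearity of the convolution. Proposition~\ref{ll1} already shows that the last two terms solve the homogeneous Cauchy problem with the correct data. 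It therefore remains to show that $w:=S\ast F$ solves $Lw=F$ with $w(0)=\partial_tw(0)=0$, and then to invoke uniqueness.

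\textbf{Step 1 (mode-by-mode Duhamel).} Pass to spatial Fourier variables, where $-\Delta_\delta$ becomes multiplication by $\Omega_\delta(p)$. Since $S(t,p)=G(t,p)$ vanishes for $t<0$, the time integral in $\ast$ is effectively $\int_0^t$, so
\[
\widehat w(t,p)=\int_0^t S(t-s,p)\,\widehat F(s,p)\,ds .
\]
Differentiate under the integral using $S(0,p)=0$ and $\partial_tS(0^+,p)=1$, the jump condition \eqref{eq:j}. This gives $\partial_t\widehat w=\int_0^t\partial_tS(t-s,p)\widehat F(s,p)\,ds$ and $\partial_t^2\widehat w=\widehat F(t,p)+\int_0^t\partial_t^2S(t-s,p)\widehat F(s,p)\,ds$. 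For $t>0$, $S(\cdot,p)$ solves $(\partial_t^2+\gamma\partial_t+\omega_\delta(p)^2)S=0$, so
\[
(\partial_t^2+\gamma\partial_t+\omega_\delta^2)\,\widehat w=\widehat F,
\]
and $\widehat w(0)=\partial_t\widehat w(0)=0$. Inverting the Fourier transform yields $Lw=F$ on $L_\delta$.

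\textbf{Step 2 (uniqueness).} Suppose $\phi$ and $\psi$ both satisfy $L(\cdot)=F$ with the same Cauchy data. Then their difference solves the homogeneous problem with zero data, and it vanishes by the uniqueness part of Proposition~\ref{ll1}, mode by mode this is an ODE with zero initial data. Hence $\phi=\psi$, which is \eqref{eq:duhamel}. Because $F$ itself contains $\phi$, the formula is an identity satisfied by any solution, i.e.\ an integral equation, rather than an explicit formula. This is the intended reading, since the equation is later iterated in powers of $\lambda$.

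\textbf{Main obstacle.} The delicate point is justifying Step 1 when $\xi$ is Gaussian space-time noise, which is only a distribution in $t$. Two routes are available.
\begin{itemize}
\item[(a)] \emph{Distributional identity.} Interpret everything distributionally: $L(S\ast F)=(LS)\ast F=(\delta(t)\delta_x)\ast F=F$, using \eqref{eq:Green} and the fact that $L$ commutes with space-time convolution. The zero initial data then come from causality of $S$.
\item[(b)] \emph{Pathwise approach.} Write $S\ast\xi=\int_0^tS(t-s,\cdot)\star_\delta dW_s$, which is well defined because $S$ is smooth in $s$. Identify it as the mild solution of the linear SPDE via stochastic Fubini.
\end{itemize}
In either route, the convergence of the lattice sums and the interchange of derivatives with sums are controlled by the decay bound of Lemma~\ref{Bos} and the kernel estimates \eqref{kerX}, provided $\phi^p(s,\cdot)$ stays in $\mathcal X_{d+1}$ on bounded time intervals.

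I would present route (a) as the main argument and mention (b) for rigor.
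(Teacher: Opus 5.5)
Your proposal is correct and follows the paper's own argument: spatial Fourier transform, the mode-wise particular solution $\int_0^t S(t-s,p)\,\widehat h(s,p)\,ds$, and superposition with the homogeneous solution of Proposition~\ref{ll1}; your jump-condition check and the linear uniqueness step (with the source frozen at the given $\phi$) only make explicit what the paper leaves implicit. One small correction: causality of $S$ only cuts the time integral off at $s=t$, while the lower limit $0$ (and, in route (a), the vanishing initial data of $S\ast F$) comes from taking the source $\xi-\lambda\phi^p$ to be supported in $t\ge0$, i.e.\ extended by zero outside $\Lambda$.
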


\begin{proof}
Denote by \(\widehat{\cdot}\) the Fourier transform and applying it to the field equation yields, for each fixed momentum~\(p\),
\begin{equation}\label{eq:me}
  \bigl(\partial_t^{2} + \gamma\,\partial_t + \omega_\delta(p)^{2}\bigr)
      \widehat{\phi}(t,p)
    \;=\;\widehat {(\xi - \lambda \phi^p)}(t,p)  \;=\;
    \widehat h(t,p),
\qquad
  t>0.
\end{equation}
The initial conditions become
\(\widehat{\phi}(0,p)=\widehat f(p)\) and
\(\partial_t\widehat{\phi}(0,p)=\widehat g(p)\).

 A particular
solution of equation \eqref{eq:me} is therefore
\[
  \widehat{\phi}_{\mathrm{inh}}(t,p)
     \;=\;
     \int_{0}^{t}ds\; S(t-s,p)\,\widehat h(s,p).
\]

Now by the superposition principle and combining the homogeneous solution from proposition \ref{ll1} with the particular
solution above gives, for \(t\ge 0\),
\[
  \widehat{\phi}(t,p)
     = C(t,p)\,\widehat f(p)
       + S(t,p)\,\widehat g(p)
       + \int_{0}^{t} ds\; S(t-s,p)\,\widehat h(s,p).
\]

\end{proof}
\begin{lem} \label{LOL}
Let $f,g\in\mathcal X_{d+1}$ and let the source
$h:[0,\infty)\to\mathcal X_{d+1}$ satisfy
\(
  H(t):=\sup_{0\le s\le t}\|h(s,\cdot)\|_{d+1}<\infty
\)
for every $t$.
Then the solution $\phi(t,x)$ given by the Duhamel formula
\[
  \phi(t,\cdot)
     =C(t,\cdot)\star f
      +S(t,\cdot)\star g
      +\int_0^t\!ds\,S(t-s,\cdot)\ast h(s,\cdot)
\]
belongs to $\mathcal X_{d+1}$ for all $t\ge0$ and satisfies
\begin{equation}\label{eq:phi-bound}
  {\;
  \|\phi(t,\cdot)\|_{d+1}
     \;\le\;
     A\,e^{-\frac{\gamma}{2}t}
       \bigl(\|f\|_{d+1}+\|g\|_{d+1}\bigr)
     +A\!\int_0^t\!\!
          e^{-\frac{\gamma}{2}(t-s)}\,\|h(s,\cdot)\|_{d+1}\,ds}\;.
\end{equation}
In particular
\(
  |\phi(t,x)|
     \le
     \|\phi(t,\cdot)\|_{d+1}\,\langle x\rangle^{-(d+1)}
\).

\end{lem}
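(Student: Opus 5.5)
The plan is to estimate each of the three terms of the Duhamel formula separately in the weighted norm $\|\cdot\|_{d+1}$, using the kernel bounds \eqref{kerX} as the only analytic input, and then combine them with the triangle inequality.

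\emph{Step 1: the homogeneous part.} For fixed $t\ge0$, apply \eqref{kerX} with $u=f$ and $u=g$, both of which lie in $\mathcal X_{d+1}$ by hypothesis. This gives
\[
\|C(t,\cdot)\star f\|_{d+1}+\|S(t,\cdot)\star g\|_{d+1}\le A\,e^{-\frac{\gamma}{2}t}\bigl(\|f\|_{d+1}+\|g\|_{d+1}\bigr).
\]
This is the first term on the right of \eqref{eq:phi-bound}, and it also shows that the homogeneous part lies in $\mathcal X_{d+1}$.

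\emph{Step 2: the inhomogeneous part.} For each $s\in[0,t]$, \eqref{kerX} applied to $u=h(s,\cdot)$ gives
\[
\|S(t-s,\cdot)\star h(s,\cdot)\|_{d+1}\le A\,e^{-\frac{\gamma}{2}(t-s)}\|h(s,\cdot)\|_{d+1}\le A\,H(t).
\]
Next I move the norm inside the time integral. Pointwise in $x$,
\[
\langle x\rangle^{d+1}\Bigl|\int_0^t S(t-s,\cdot)\star h(s,\cdot)(x)\,ds\Bigr|\le\int_0^t \langle x\rangle^{d+1}\bigl|S(t-s,\cdot)\star h(s,\cdot)(x)\bigr|\,ds\le \int_0^t \|S(t-s,\cdot)\star h(s,\cdot)\|_{d+1}\,ds .
\]
Taking the supremum over $x$ gives the Minkowski-type inequality for this integral, so the integral term is bounded by $A\int_0^t e^{-\frac{\gamma}{2}(t-s)}\|h(s,\cdot)\|_{d+1}\,ds\le A\,t\,H(t)<\infty$. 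In particular it is finite for every $t$.

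\emph{Step 3: conclusion.} Adding Steps 1 and 2 with the triangle inequality in $\mathcal X_{d+1}$ gives \eqref{eq:phi-bound}, and hence $\phi(t,\cdot)\in\mathcal X_{d+1}$. The pointwise bound $|\phi(t,x)|\le\|\phi(t,\cdot)\|_{d+1}\langle x\rangle^{-(d+1)}$ is just the definition of the norm read backwards.

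\emph{Main obstacle.} The delicate point is measurability of $s\mapsto S(t-s,\cdot)\star h(s,\cdot)(x)$, so that the time integral and the exchange in Step 2 make sense. I would take the hypothesis to mean that $h$ is (Bochner-)measurable in $s$, or at least measurable pointwise in $x$. Then the pointwise estimate above needs only Tonelli's theorem, and the uniform bound $H(t)$ supplies integrability on $[0,t]$. The real content sits in \eqref{kerX} itself, namely that the weight $\langle x\rangle^{d+1}$ is preserved under convolution with a kernel decaying like $\langle x\rangle^{-(d+1)}$. I take this as given from Lemma \ref{Bos} and the footnote constant $A$.
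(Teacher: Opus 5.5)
Your argument is the paper's proof written out in full. The paper only says the lemma is immediate from Lemma~\ref{Bos} (through the kernel bounds \eqref{kerX}) and the triangle inequality, and your Steps~1--3, including the Minkowski step that moves the weighted norm inside the time integral, are exactly that.

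One caution, which applies equally to the paper: the fragile input is not the weighted convolution inequality you single out, which is harmless since $d+1>d$, but the $t$-independence of $A$ in \eqref{kerX}. Each $p$-derivative of $e^{r_\pm(p)t}$ in the integration by parts behind Lemma~\ref{Bos} produces a factor $t$, so one only gets $(1+t)^{d+1}e^{-\frac{\gamma}{2}t}$, and hence \eqref{eq:phi-bound} with rate $\frac{\gamma}{2}-\varepsilon$ and a constant depending on $\varepsilon$; your deduction is otherwise unchanged.
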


\begin{proof}
The proof is immediate from 
Lemma \ref{Bos} and the triangle inequality.

\end{proof}

\begin{rem}
\begin{itemize}~
\item Because the spatial decay exponent is \(d+1\,(>d)\),
the kernel \(S(t,\cdot)\) is in \(\ell^{1}(\Lambda_\delta)\)
uniformly in time, ensuring absolute convergence of all spatial
convolutions that appear in the Duhamel formula.
\item When \(\gamma=0\) the roots \(r_\pm(p)\) become
\(\pm \omega_\delta(p)\); consequently \(C\) and \(S\) reduce to the
cosine and sine kernels familiar from the lattice Klein--Gordon
equation.  Formula~\eqref{eq:duhamel} then coincides with the standard
Duhamel representation used in dispersive PDE theory.
\end{itemize}
\end{rem}
    
We now develop the solution of the SPDE \eqref{eq:SPDE} in the sense of  formal power series in the parameter $\lambda$, by written
\[
\phi(t,x) = \sum_{j=0}^\infty \lambda^j \phi_j(t,x).
\]
\begin{prop}\label{PER}
   Let $ \phi(t,x) = \sum_{j=0}^\infty \lambda^j \phi_j(t,x)$ and assume that $S \star \xi \in \mathcal X_{d+1}. $ The perturbative solution of the SPDE \eqref{eq:SPDE}, is given in the sense of formal power series as:
\begin{equation}\label{RC}
\left\{
	\begin{array}{ll}
     \phi_0(t,x) := S\ast\xi(t,x) + C\star_\delta f(x) + S\star_\delta g(x),& \\ 
		\phi_j(t,x)
= -\,S\ast
\left(
\ds_{\substack{n_0,n_1,\ldots\ge0\\ n_0+n_1+\cdots=p\\ \sum_{i\ge0} i n_i = j-1}}
\frac{p!}{n_0!n_1!\cdots}\;
\prod_{i\ge0} \phi_i^{\,n_i}
\right)(t,x),\,\, \forall j \geq 1.& 
	\end{array}
	\right.
  \end{equation}
\end{prop}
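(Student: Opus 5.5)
The plan is to substitute the formal series $\phi=\sum_{j\ge0}\lambda^j\phi_j$ into the Duhamel representation of Theorem \ref{DUH}, expand $\phi^p$ by the multinomial theorem, and match coefficients of $\lambda^j$. The argument lives in the ring of formal power series in $\lambda$ with coefficients in $\mathcal X_{d+1}$-valued functions of $t$. All operations are therefore finite at each order, and no convergence in $\lambda$ is needed. Since $S\ast(\cdot)$ is linear, it acts coefficientwise on such series. Lemma \ref{LOL}, together with the bound (\ref{kerX}), guarantees that each application of $S\ast$ to a term in $\mathcal X_{d+1}$ stays in $\mathcal X_{d+1}$, so every coefficient is well defined.

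\textbf{Step 1: expand the nonlinearity.} Write
\[
\phi^p=\Bigl(\sum_{i\ge0}\lambda^i\phi_i\Bigr)^p
=\sum_{\substack{n_0,n_1,\ldots\ge0\\ n_0+n_1+\cdots=p}}\frac{p!}{n_0!n_1!\cdots}\prod_{i\ge0}\lambda^{in_i}\phi_i^{n_i}.
\]
Grouping by the total power $k=\sum_i i\,n_i$ gives $\phi^p=\sum_{k\ge0}\lambda^k P_k$, where $P_k$ is the multinomial sum restricted to $\sum_i i n_i=k$. For fixed $k$ only finitely many tuples $(n_i)$ occur: $n_i=0$ for $i>k$, and $\sum n_i=p$. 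Hence $P_k$ is a finite sum of products of $\phi_0,\dots,\phi_k$. Because $\mathcal X_{d+1}$ is closed under pointwise products (since $\langle x\rangle^{-(d+1)}\le1$), each $P_k(t,\cdot)$ lies in $\mathcal X_{d+1}$ once the $\phi_i$ do.

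\textbf{Step 2: match coefficients in (\ref{eq:duhamel}).} The right-hand side of (\ref{eq:duhamel}) becomes
\[
S\ast\xi+C\star_\delta f+S\star_\delta g-\sum_{k\ge0}\lambda^{k+1}S\ast P_k .
\]
The order-$\lambda^0$ coefficient yields $\phi_0$ as stated. The assumption $S\ast\xi\in\mathcal X_{d+1}$ and Proposition \ref{ll1}/(\ref{kerX}) put $\phi_0$ in $\mathcal X_{d+1}$. For $j\ge1$, the $\lambda^j$ coefficient is $-S\ast P_{j-1}$, which is exactly (\ref{RC}). Since $P_{j-1}$ involves only $\phi_0,\dots,\phi_{j-1}$, the system is triangular. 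Induction on $j$, using Lemma \ref{LOL} with $f=g=0$ and $h=P_{j-1}$, then shows that each $\phi_j$ is well defined and belongs to $\mathcal X_{d+1}$, with a bound controlled by $\int_0^t e^{-\gamma(t-s)/2}\|P_{j-1}(s)\|_{d+1}ds$.

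\textbf{Step 3: uniqueness.} The triangular structure shows that the coefficients are uniquely determined. Conversely, the series so defined satisfies (\ref{eq:duhamel}) identically as a formal power series. By Theorem \ref{DUH} it is therefore the formal perturbative solution of (\ref{eq:SPDE}).

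The main obstacle is bookkeeping rather than analysis. One must verify that the reindexing of the multinomial expansion by $\sum i n_i$ is legitimate, which follows from finiteness at each order. One must also check that the local-in-time sup bound $H(t)<\infty$ required by Lemma \ref{LOL} holds for $P_{j-1}$. I would handle the latter by carrying, in the induction hypothesis, the bound $\sup_{s\le t}\|\phi_i(s)\|_{d+1}<\infty$ for $i<j$. This hypothesis is preserved by (\ref{eq:phi-bound}), because products of the $\phi_i$ satisfy $\|uv\|_{d+1}\le\|u\|_{d+1}\|v\|_{d+1}$.
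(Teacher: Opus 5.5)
Your proposal is correct and follows the paper's proof. Both substitute $\phi=\sum_j\lambda^j\phi_j$ into the Duhamel formula of Theorem \ref{DUH}, expand $\phi^p$ by the multinomial theorem, and read off the $\lambda^j$ coefficient as $-S\ast P_{j-1}$. Your induction showing each $\phi_j$ lies in $\mathcal X_{d+1}$ goes beyond the paper's proof, which only asserts this in the remark after it. Its base case needs the hypothesis $S\ast\xi\in\mathcal X_{d+1}$ to be read as $\sup_{s\le t}\|(S\ast\xi)(s,\cdot)\|_{d+1}<\infty$ for every $t$, so that Lemma \ref{LOL} applies to $h=\phi_0^p$.
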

\begin{proof}
For the zeroth-order, $j=0$ and $\lambda=0$, we get directly from equation (\ref{eq:duhamel})

\[
\phi_0(t,x) := S\ast\xi(t,x) + C\star_\delta f(x) + S\star_\delta g(x).
\]
For $j\ge 1$, the coefficients $\phi_j$ are determined recursively by comparing the coefficients.  Insert $\phi=\sum_{j\ge0}\lambda^j\phi_j$ into \eqref{eq:duhamel}, expand $(\sum_j \lambda^j\phi_j)^p$ via the multinomial theorem, and collect powers of $\lambda$, we get 

for $j\ge 1$, 
\begin{equation}
 \phi_j(t,x)
= -\,S\ast
\left(
\sum_{\substack{n_0,n_1,\ldots\ge0\\ n_0+n_1+\cdots=p\\ \sum_{i\ge0} i n_i = j-1}}
\frac{p!}{n_0!n_1!\cdots}\;
\prod_{i\ge0} \phi_i^{\,n_i}
\right)(t,x),   
\end{equation}
this concludes the proof.
\end{proof}
From lemma \ref{LOL} and proposition \ref{PER}, one can see that the sequence $\phi_j,\, j\geq 0$ given by equation (\ref{RC}) is well defined in $\mathcal X_{d+1}.$

\begin{rem}~

\begin{itemize}
\item The characteristic equation and root structure for the Green function are significantly changed by the presence of damping (\(\gamma > 0\)), necessitating a more complex analytical treatment than in the undamped case. This subtle handling of damping effects within a second-order stochastic PDE is a crucial component of the uniqueness of our perturbative framework.
\item When $\gamma=0$, the Green function $G$ reduces to the retarded propagator of the (undamped) discrete Klein–Gordon equation.
\item When the operator is first-order in time, one recovers exactly the results by \cite{GS}. However, the extension to a damped second-order operator (compared to the first-order case in \cite{GS}) introduces significant analytical challenges in handling the interplay between dissipation and stochastic forcing. Our current perturbative solution specifically addresses these complexities by developing a unique Green function and diagrammatic rules that go beyond merely recovering known results, thus presenting a clear technical advancement.
\end{itemize}
\end{rem}
\section{ Feynman rules, worked trees and graphical representations}
In this section we give a graphical representation to the solution of the nonlinear SPDE (\ref{eq:SPDE}), we begin by some useful definitions, standard results on graph theory, more details about this subject can be found in \cite{JPS, OYO}. 
\begin{Def}~
\begin{itemize}
\item A graph ${\cal G}$ is an ordered pair ${\cal G} =(V,E) $ ,  where $V$ is a nonempty set whose elements are called vertices (or nodes) and $E$ is a set of edges, where each edge is an unordered pair of distinct vertices from $V$. \\
If every pair of distinct vertices in $V$ is joined by exactly one edge, then 
${\cal G} $ is called a complete graph.
Simply, ${\cal G}$ is also denoted by   ${\cal G} ={\cal G} (V,E).$
\item A tree $T$ is a connected graph that contain no cycles.
\end{itemize}
\end{Def}
By fixing a vertex $v$ in the tree $T,\, v$ will be the root of $T$ and any other vertex can be characterized by its position from $v.\, T$ is then called \textit{rooted tree with root $v$.}\\
Let $u, v \in V(T),$ then if one of the two vertices is the \textit{parent} the other is called \textit{child}. By cutting a tree $T$ from a vertex $v$, the result is a sub-tree with root $v$.
\begin{Def}~
    \begin{itemize}
        \item The degree of a vertex $u$ in a tree $T$ is equal to the number of edges $e$ such that \\$e=\{u, v\}, v \in V(T)$ in addition to $e=\{u\}$; i.e:
        \begin{equation}
            M(u):=\#\{ e=\{u, v\}, v \in V(T) \}+ \#\{ e=\{u\} \}.
        \end{equation}
        \item The multiplicity of a tree $T,$ noted by $M(T)$ is the product of the degree of all vertices in $T.$
    \end{itemize}
\end{Def}

\subsection{Feynman diagrams and analytic value of the Tree}
In the following we define a rooted tree with $m$ inner vertices and three types of leaves.\\
The root $x \in \Lambda$, denoted by $\times$ and the leaves of the tree are of type one $(\xi)$, type 2 $(f)$, type 3 $(g)$ denoted by diamond, circle  and concentric circles respectively.\\
The set of all rooted trees $T$ with root $x \in\Lambda$, $m$ inner vertices  and three types of leaves with $p+1$  legs is denoted by ${\cal T}(m)$.\\
The following diagrams gives an analytic value to the tree $T \in {\cal T}(m)$.
\subsection*{ Compact table of Feynman rules (typed rooted trees)}
\begin{center}
\renewcommand{\arraystretch}{1.25}
\begin{tabular}{p{0.26\linewidth} p{0.68\linewidth}}
\hline
\textbf{Object} & \textbf{Rule / Analytic factor} \\
\hline
Edge (internal) & Retarded propagation by $G$: if a child at $(t',y')$ links to parent at $(t,y)$, integrate with $G(t-t',y-y')$, $t\ge t'$.\\
First edge above an $f$-leaf (leaf of type 2) & Use $C(t,\cdot)$ instead of $G$. \\
First edge above a $g$-leaf (leaf of type 3) & Use $S(t,\cdot)$ instead of $G$. \\
Inner vertex (parity $p$) & Multiply values of the $p$ incoming branches (pointwise in space–time), then propagate upward by $G$. \\
Leaf of type 1 ($\xi: \diamond$) & Insert $\xi(t',y)$ at its own space–time point, integrate upward. \\
Leaf of type 2 ($f: \circ$) & Insert $f(y)$ at time $0$, then propagate by $C(t,\cdot)$. \\
Leaf of type 3 ($g:
$) & Insert $g(y)$ at time $0$, then propagate by $S(t,\cdot)$. \\
Root evaluation & Evaluate the resulting expression at the observation point $(t,x)$. \\
Symmetry factor & Multiply by the multiplicity of the tree $T,\, M(T)$ (product of local factors $\frac{p!}{\prod_\alpha n_\alpha!}$ at each inner vertex). \\
Integration & Integrate over $(t,x)\,\in \Lambda$.\\
\hline
\end{tabular}
\end{center}
For the reader convenience, the following examples illustrate the above rules and gives analytic values to each tree $T \in {\cal T}(m).$

\subsection*{A. Worked trees for $p=3$ (cubic nonlinearity) up to order $j=2$}

We write $G_t:=G(t,\cdot)$, $C_t:=C(t,\cdot)$, $S_t:=S(t,\cdot)$ to keep formulas compact. Spatial convolutions are denoted $\star_\delta$.

\subsubsection*{Order $j=1$ (one inner vertex).}
This is the case when $\phi_1=-\,G\ast(\phi_0^3)$. Expanding $\phi_0^3$ gives $3$-fold products of the three order-$0$ branches. The corresponding graph representations are summarized in the examples below:

\paragraph{Example A1 (all-noise leaves: $\xi\xi\xi$).}
\[
\phi_{1}^{(\xi^3)}(t,x)
=-\!\!\int_{0}^{t}\!\!\mathrm{d}s\!\!\sum_{y\in L_\delta}\! G(t-s,x-y)
\Big[(G\ast \xi)(s,y)\Big]^3.
\]
In full space–time:

\begin{equation}
\begin{aligned}
\phi_{1}^{(\xi^3)}(t,x)&=-\!\!\int_{0}^{t}\!\!\mathrm{d}s\!\!\sum_{y}\! G(t-s,x-y)
\prod_{k=1}^{3}\left(\int_{-\infty}^{s}\!\!\mathrm{d}s_k\sum_{y_k} G(s-s_k,y-y_k)\,\xi(s_k,y_k)\right)\\
&=
\begin{tikzpicture}[baseline=-3pt,scale=1]

    \node[root] (x1) at (0,0) {}; 
    \node[node] (y1) at (1,0) {\large$\bullet$}; 

    \node[D] (D1) at (2,0.7) {}; 
    \node[D] (D2) at (2,0) {}; 
    \node[D] (D3) at (2,-0.7) {}; 
    
    \draw[shorten <=-0.1pt, shorten >=-0.1pt]
    (x1)--(y1)
    (y1)--(D1) 
    (y1)--(D2) 
    (y1)--(D3);
    \end{tikzpicture}
    \end{aligned}
\end{equation}

\paragraph{Example A2 (one $f$ and two $\xi$ leaves: $f\,\xi\,\xi$).}
There are $\binom{3}{1}=3$ placements; the local symmetry factor $\frac{3!}{1!\,2!}=3$ cancels the overcount, therefore the graphical representation is given by:

\begin{equation}
\begin{aligned}
\phi_{1}^{(f\xi^2)}(t,x)&=-\!\!\int_{0}^{t}\!\!\mathrm{d}s\!\!\sum_{y} G(t-s,x-y)\,
\Big(C_s\star_\delta f\Big)(y)\,
\prod_{k=1}^{2}\left(\int_{-\infty}^{s}\!\!\mathrm{d}s_k\sum_{y_k} G(s-s_k,y-y_k)\,\xi(s_k,y_k)\right)\\
&=
\begin{tikzpicture}[baseline=-3pt,scale=1]

    \node[root] (x1) at (0,0) {}; 
    \node[node] (y1) at (1,0) {\large$\bullet$}; 

    \node[O] (D1) at (2,0.7) {}; 
    \node[D] (D2) at (2,0) {}; 
    \node[D] (D3) at (2,-0.7) {}; 
    
    \draw[shorten <=-0.1pt, shorten >=-0.1pt]
    (x1)--(y1)
    (y1)--(D1) 
    (y1)--(D2) 
    (y1)--(D3);
    \end{tikzpicture}
    \end{aligned}
\end{equation}

\subsubsection*{Order $j=2$ (two inner vertices).}
Now one vertex is attached to the root; one of its $3$ children is itself an inner vertex. 
\begin{itemize}
\item[(i)] \emph{Nested:} $\big((\cdot\,\cdot\,\cdot)\ \cdot\ \cdot\big)$ where the first slot is a full order-1 subtree, others are order-0 leaves.
\item[(ii)] \emph{Permutation of (i):} identical up to relabeling of which child is the subtree (accounted for by the local symmetry factor).
\end{itemize}
Below are the corresponding analytic and graphical representations:

\paragraph{Example A5 (all-noise, nested).}
Let
\[
\Psi_1(s,y):=\big[\phi_1^{(\xi^3)}\big](s,y)
=-\!\!\int_{0}^{s}\!\!\mathrm{d}\tau \sum_{z} G(s-\tau,y-z)\,\Big[(G\ast\xi)(\tau,z)\Big]^3.
\]
Then the order-2 nested contribution with two additional $\xi$ leaves reads

\begin{equation}
\begin{aligned}
\phi_{2}^{(\xi^5)}(t,x)&=-\!\!\int_{0}^{t}\!\!\mathrm{d}s\sum_{y} G(t-s,x-y)
\ \Psi_1(s,y)\ \prod_{k=1}^{2}\left(\int_{-\infty}^{s}\!\!\mathrm{d}s_k\sum_{y_k} G(s-s_k,y-y_k)\,\xi(s_k,y_k)\right)\\
&=
\begin{tikzpicture}[baseline=-3pt,scale=1]

    \node[root] (x1) at (0,0) {}; 
    \node[node] (y1) at (1,0) {\large$\bullet$}; 
    \node[node] (y2) at (2,0) {\large$\bullet$};
    \node[D] (D1) at (2,0.7) {}; 
    \node[D] (D2) at (2,-0.7) {}; 
    \node[D] (D3) at (3,0.7) {}; 
    \node[D] (D4) at (3,0) {}; 
    \node[D] (D5) at (3,-0.7) {}; 
    \draw[shorten <=-0.1pt, shorten >=-0.1pt]
    (x1)--(y1)
    (y1)--(D1)
    (y1)--(D2)
    (y1)--(y2)
    (y2)--(D3)
    (y2)--(D4)
    (y2)--(D5);
    \end{tikzpicture}
    \end{aligned}
\end{equation}

\newpage
\section{Numerical Simulations}

We illustrate the thermalization and symmetry-breaking mechanisms described in this work
by direct numerical simulations of the stochastic Klein-Gordon equation on a lattice.
The starting point is the damped stochastic field equation
\begin{equation}
\phi_{tt}(t,x) + \gamma \phi_t(t,x)
- \Delta_\delta \phi(t,x)
+ \mu^2 \phi(t,x)
+ \lambda \phi(t,x)^p
= \xi(t,x),
\label{eq:SKG}
\end{equation}
where $\Delta_\delta$ denotes the discrete Laplacian on a lattice with spacing $\delta$.\,\,
$\gamma>0$ is the damping coefficient, and $\xi(t,x)$ is a stochastic forcing term.
Throughout this section we focus on the symmetry--breaking regime
\begin{equation}
\mu^2 < 0, \qquad \lambda > 0, \qquad p = 3,
\end{equation}
which corresponds to a double-well effective potential.\\
These numerical simulations provide novel empirical confirmation for the predictions of our perturbative analysis, particularly when the dynamics of the domain formation and relaxation are under combined damping and stochastic forcing that differentiate from general relaxation studies like \cite{EP}. This quantitative validation of our theoretical framework in a highly dynamic, symmetry-breaking regime constitutes a unique contribution.
\subsection{First-order formulation and simulations parameters}

Introducing the velocity field $v(t,x) = \partial_t \phi(t,x)$,
Eq.~\eqref{eq:SKG} can be rewritten as the first--order system
\begin{align}
\mathrm{d}\phi(t,x) &= v(t,x)\,\mathrm{d}t, \\
\mathrm{d}v(t,x) &=
\Bigl(
\Delta_\delta \phi(t,x)
- \gamma v(t,x)
- \mu^2 \phi(t,x)
- \lambda \phi(t,x)^p
\Bigr)\mathrm{d}t
+ \sigma\,\mathrm{d}W(t,x),
\end{align}
where $\mathrm{d}W(t,x)$ denotes space--time Gaussian white noise
and $\sigma>0$ controls the noise intensity.

The discrete Laplacian with periodic boundary conditions is given by
\begin{equation}
(\Delta_\delta \phi)_i =
\frac{\phi_{i+1} + \phi_{i-1} - 2\phi_i}{\delta^2}.
\end{equation}

The system is integrated using an explicit Euler-Maruyama time discretization. \\
For a time step $\Delta t$, the update rules are
\begin{align}
v_i^{n+1} &=
v_i^n
+ \Bigl(
(\Delta_\delta \phi^n)_i
- \gamma v_i^n
- \mu^2 \phi_i^n
- \lambda (\phi_i^n)^p
\Bigr)\Delta t
+ \sigma \sqrt{\Delta t}\,\eta_i^n, \\
\phi_i^{n+1} &= \phi_i^n + v_i^{n+1}\Delta t,
\end{align}
where $\eta_i^n$ are independent standard normal random variables, the simulations uses the following parameter values:
\begin{itemize}
\item Lattice size: $N = 128$ sites (one--dimensional), lattice spacing: $\delta = 1$,
\item Time step: $\Delta t = 10^{-2}$, total integration time: $T = 60$,
\item Damping: $\gamma = 1.0$, mass parameter: $\mu^2 = -1.0$, coupling constant: $\lambda = 1.0$,
\item Nonlinearity exponent: $p = 3$, noise intensity: $\sigma = 0.2$--$1.0$.
\end{itemize}

Initial conditions are chosen as small random fluctuations around the symmetric state,
\begin{equation}
\phi(0,x) = \varepsilon(x), \qquad v(0,x) = 0,
\end{equation}
with $\varepsilon(x)$ a weak Gaussian random field.\\
\subsection{Observables}

To characterize thermalization and symmetry breaking, we monitor :
\begin{itemize}
\item The spatially averaged order parameter
\begin{equation}
m(t) = \langle \phi(t,x) \rangle_x,
\end{equation}
\item The spatial variance
\begin{equation}
\mathrm{Var}_x[\phi(t,x)] =
\langle \phi(t,x)^2 \rangle_x - \langle \phi(t,x) \rangle_x^2,
\end{equation}
\item Spatial snapshots $\phi(t,\cdot)$ at selected times.
\end{itemize}
In the symmetry-broken phase ($\mu^2<0$), the system evolves from an initially
symmetric configuration toward domains fluctuating around the two minima $\pm v$,
with $v = \sqrt{-\mu^2/\lambda}$.
The order parameter exhibits slow drift toward one of the two symmetry--related states,
while the variance saturates, signaling thermalization.

\end{document}